\newcommand{\wt}{\widetilde{\mathbf{w}}}
\newtheorem{thm}{Theorem}
\newtheorem{rem}{Remark}
\def\argmax{\operatornamewithlimits{arg\,max}}
\newcommand{\bE}{\mathds{E}}
\newcommand{\mb}{\mathbf}
\newcommand{\ol}{\overline}
\renewcommand{\wt}{\widetilde}
\newcommand{\id}{\mathds}
\newcommand{\wh}{\widehat}
\begin{document}

\title{Can We Achieve Fresh Information with Selfish Users in Mobile Crowd-Learning?}

\author{Bin Li$^{\dag}$ \mbox{\hspace{0.4cm}} Jia Liu$^{*}$
\\ $^{\dag}$Dept. of Electrical, Computer and Biomedical Engineering, University of Rhode Island
\\ $^{*}$Dept. of Computer Science, Iowa State University
\thanks{
Authors are listed in alphabetical order. Both authors are primary authors.
}
}

\maketitle


\begin{abstract}
The proliferation of smart mobile devices has spurred an explosive growth of mobile crowd-learning services, where service providers rely on the user community to voluntarily collect, report, and share real-time information for a collection of scattered points of interest. 
A critical factor affecting the future large-scale adoption of such mobile crowd-learning applications is the freshness of the crowd-learned information, which can be measured by a metric termed ``age-of-information'' (AoI). 
However, we show that the AoI of mobile crowd-learning could be arbitrarily bad under selfish users' behaviors if the system is poorly designed.
This motivates us to design efficient reward mechanisms to incentivize mobile users to report information in time, with the goal of keeping the AoI and congestion level of each PoI low.
Toward this end, we consider a simple linear AoI-based reward mechanism and analyze its AoI and congestion performances in terms of price of anarchy (PoA), which characterizes the degradation of the system efficiency due to 
selfish behavior of users. 
Remarkably, we show that the proposed mechanism achieves the optimal AoI performance asymptotically in a deterministic scenario.
Further, we prove that the proposed mechanism achieves a bounded PoA in general stochastic cases, and the bound only depends on system parameters. 
Particularly, when the service rates of PoIs are symmetric in stochastic cases, the achieved PoA is upper-bounded by $1/2$ asymptotically. 
Collectively, this work advances our understanding of information freshness in mobile crowd-learning systems.
\end{abstract}



\section{Introduction} \label{sec:intro}

Fueled by the proliferation of smart mobile devices (e.g., smartphones, tablets, etc.), recent years have witnessed a rapid growth of information services and data analytics based on large-scale {\em crowd-learning}.
A key defining feature of these crowd-learning applications is that they rely on the user community to voluntarily collect, report, and share real-time information for a set of distributed points of interest (PoI).
Such crowd-learned information will in turn benefit the users themselves and attract more users to join the community (by reputation, word of mouth, etc.), which further enhances the accuracy, value, and significance of the crowd-learning applications.
For example, the real-time traffic congestion and accident information on Google Waze\cite{Waze} (a community-based GPS system) relies on the reports from mobile devices and the tracking of their locations, densities, and trajectories.
As another example, by offering a variety of incentives, many data analytics services leverage their user communities to share real-time information of scattered commodities and resources, such as cheap gasoline stations (e.g., GasBuddy\cite{GasBuddy}), parking space availability (e.g., Pavemint\cite{Pavemint}), free WiFi hotspots (e.g., WiFi Finder\cite{WiFiFinder}), popular grocery deals information (e.g., Basket\cite{Basket}), to name just a few.
It can be foreseen that new crowd-learning applications will continue to emerge.

Although mobile crowd-learning holds a great potential to fundamentally change our modern society, 
a critical factor affecting its future large-scale adoption is the {\em freshness} of the crowd-learned information, which can be measured by a fundamental metric termed ``Age-of-Information'' (AoI).
Guaranteeing information freshness in crowd-learning is critical because stale information discourages existing and new users from participating, which in turn degrades the information freshness and creates a vicious circle.
Unfortunately, due to the special dynamics between the service provider and the users, there is an inherent lack of information freshness guarantee in mobile crowd-learning:
First, to maintain information freshness, the service provider needs to incentivize the users to update the states of the PoIs.
Second, the crowd-learning users are ``selfish'' in the sense that their best interest is to maximize their own benefit from participating in crowd-learning, rather than minimizing the AoI for the service provider.
Hence, a poorly designed incentive mechanism could result in two undesirable consequences: (i) too many users flock to an attractive PoI, which leads to redundant sampling and severe queueing congestion; and (ii) all other PoIs suffer from large AoI because of under-sampling.
In light of these unique characteristics of mobile crowd-learning, several fundamental open questions naturally arise:
\begin{list}{\labelitemi}{\leftmargin=1.2em \itemindent=-0.8em \itemsep=.2em}
\item[1)] Is it possible to guarantee information freshness by incentivizing selfish users in mobile crowd-learning?
\item[2)] If the answer to 1) is ``yes,'' what is the fundamental relationship between reward and AoI in crowd-learning?  
\item[3)] How to design reward mechanisms to avoid large queueing congestion while guaranteeing AoI in crowd-learning?
\end{list}


However, answering the above questions are non-trivial because the AoI and congestion analysis in mobile crowd-learning face the following challenges:
First, there is a lack of analytical model that characterizes the essential features of mobile crowd-learning in the literature.
Most of the existing work on crowd-sensing are based on static models that hardly capture the dynamic and stochastic nature of participating users in mobile crowd-learning.
Second, as shown by recent studies (see, e.g., \cite{Kaul12:AO_CISS,Yates12:AO_ISIT,Kaul12:AO_INFOCOM,Kaul11:AO_SECON}), AoI dynamics are fundamentally different from the traditional queueing evolution, which necessitates new theoretical tools.
Third, as will be shown later, there is a strong {\em coupling} between the AoI and queue-length processes in crowd-learning, where changing the design of either one would significantly affect that of the other.  

In this paper, we overcome the above challenges and propose a new analytical model coupled with the {\em Price of Anarchy} (PoA) metric, which characterizes the degradation of a system due to selfish behavior of users\footnote{The value of PoA is always between $0$ and $1$, and the larger the PoA, the less efficient the system. See Sections~\ref{sec:linear_mechanism}--\ref{sec:general} for more in-depth discussions.}. 
This enables us to analyze and understand the relationships between AoI, queueing congestion, and rewards under users' selfishness. 
The main results and contributions of this paper are as follows:

\begin{list}{\labelitemi}{\leftmargin=1em \itemindent=-0.5em \itemsep=.2em}
\item First, we develop a new analytical model for mobile crowd-learning, which takes into account the strong couplings between the stochastic arrivals of participating users, PoIs' information evolutions, and reward mechanisms.
As will be discussed next, this new analytical model enables us to reveal the fundamental scaling law between AoI, queueing congestion, and the reward rate set by the service provider. 


\item Next, as a starting point, we analyze the AoI performance under a linear AoI-based reward mechanism in a deterministic setting, where there is exactly one arriving user in each time slot, and each PoI serves exactly one user (if any) in each time slot (and hence no queueing effect in this setting).
We show that given an AoI reward rate $\beta$, the PoA is upper-bounded by $O(1/\beta)$, which implies that the system achieves the optimal AoI as $\beta$ increases asymptotically.


\item Finally, based on our results for the deterministic case, we characterize the joint AoI-congestion performance of mobile crowd-learning for stochastic settings.
Although the reward policy design for joint AoI and queueing congestion optimization remains an open problem in stochastic settings, surprisingly, we show that the above linear AoI-based reward mechanism yields a bounded PoA, which only depends on the arrival and service parameters of the system. In the case of symmetric services, the PoA  is upper-bounded by $1/2$ as the reward rate $\beta$ increases asymptotically. 


\end{list}

Collectively, our results in this paper advance the understanding of achieving information freshness in mobile crowd-learning with selfish users.
The remainder of this paper is organized as follows:
Section~\ref{sec:related_work} reviews related work.
Section~\ref{sec:model} introduces system model and problem statement.
Section~\ref{sec:linear_mechanism} introduces a linear reward mechanism, and
Sections~\ref{sec:periodic}--\ref{sec:general} study its PoAs in the deterministic and stochastic cases, respectively.
Section~\ref{sec:numerical} presents numerical results and Section~\ref{sec:conclusion} concludes this paper.


\section{Related Work} \label{sec:related_work}

To put our work in comparative perspectives, in this section, we provide an overview on the related work in the areas of crowd-sensing and age-of-information, respectively.

\smallskip
{\bf a) Crowd-Sensing:}
In the literature, crowd-sensing refers to the sensing model where a group of individuals collectively measure some common phenomena, e.g., environmental quality monitoring\cite{Creek_watch}, noise pollution assessment\cite{Maisonneuve09:noise_pollution,Rana10:noise_mapping}, and traffic monitoring\cite{Zhang14:crowdsensing}, etc. 
Although crowd-sensing bears some similarity to mobile crowd-learning, the main focuses of the crowd-sensing research community are on network resource management, system infrastructure, incentive mechanism designs, etc. (see \cite{Liu16:Crowdsensing_Survey} for a comprehensive survey).
In contrast, the overarching theme of this paper is to guarantee {\em information freshness} in learning {\em scattered} objects by a {\em selfish} crowd. 
Moreover, most of the existing crowd-sensing research adopts either a static model, where the set of sensing individuals is fixed (see, e.g., \cite{Chen16:search_system} and references therein); or based on a static game-theoretic model, where a fixed set of sensing individuals are incentivized/contracted by a fixed set of employers (see, e.g.,\cite{Han16:Crowdsensing_Game} and references therein).
These are fundamentally different from our {\em dynamic} model described in Section~\ref{sec:model}.
Hence, our work fills a critical gap in understanding large-scale mobile crowd-learning. 

\smallskip
{\bf b) Age-of-Information (AoI):}
Originated from sensing systems, AoI has attracted increasing attention from the information theory, signal processing, and communications communities in recent years.
Besides being a useful performance metric, AoI also possesses several key features that distinguish itself from the traditional notion of queueing delay.
Most notably, in many sensing systems, it has been found that while queueing delay benefits from lower sampling rates (implying less data traffic), AoI is non-monotone with respect to sampling rates.
This key difference has sparked AoI research in several aspects, e.g., real-time sampling and remote estimation trade-off\cite{Sun17:RT_ISIT,Gao15:RT_CDC}, joint source-channel coding exploitation\cite{Ceran17:SCC_ArXiv,Yates17:SCC_ISIT}, caching \cite{Kam17:CF_ISIT,Yates17:CF_ISIT}, optimization algorithms for AoI minimization \cite{Sun17:AO_TIT,Kaul17:AO_ISIT}, age-based scheduling\cite{Li15:AS_INFOCOM,Lakshminarayana09:AS_CHPCN}, just to name a few.
We note that the key differences between our research and the existing AoI research are: i) the tight coupling and dependence between {\em multi-user arrival dynamics} and {\em multi-source information time series} on a {\em network level}; and ii) the complex interactions between AoI, fresh/outdated information, and queueing, all of which are governed by the service provider's  {\em reward mechanism designs}.
These key differences introduce new challenges in guaranteeing stochastic network information freshness unseen in existing AoI research.


\section{Network Model and Problem Statement} \label{sec:model}

As shown in Fig.~\ref{fig:sys_model}, we consider a mobile crowd-learning system consisting of $N$ nodes that represent $N$ points of interest (PoI), e.g., road intersections, parking garages, potential WiFi hotspots, gas stations, etc.
We consider a time-slotted system.
In each time slot $t$, each PoI $n$ has some state information $p_{n}[t]$ (e.g., congestion level, parking rate and space, gas price, etc.) that is time-varying and to be sampled by their users. 
We assume that $p_n[t] \in [p_{\min}, p_{\max}], \forall t,$ for some positive constants $p_{\min}$ and $p_{\max}$.
A service provider (i.e., a crowd-learning-based information/data analytics platform) relies on randomly arriving users to sample and report the states of the PoIs.
The service provider maintains a record for each PoI, whose value in time slot $t$ is denoted as $r_{n}[t]$, $n=1,\ldots,N$. 
For ease of exposition, we will refer to $p_{n}[t]$ and $r_{n}[t]$ as ``\emph{price}'' and ``\emph{recorded price}'' in the rest of this paper, respectively.
Let $u_{n}[t]$ be the most recent update time up to time slot $t$ for PoI $n$'s record.
Hence, the {\em age} (freshness) of record $r_{n}[t]$ in time slot $t$ can be represented as $\Delta_{n}[t] = t-u_{n}[t]$.

\begin{figure}[t!]
\centering
\includegraphics[scale=0.66]{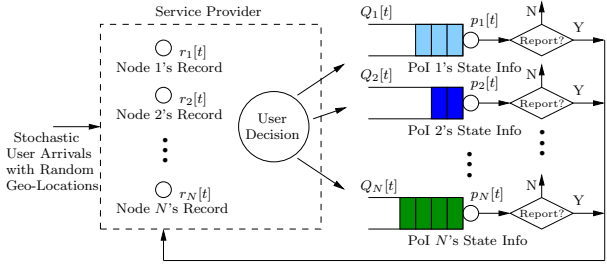}
\caption{A system model for mobile crowd-learning.}
\label{fig:sys_model}
\vspace{-0.2in}
\end{figure}

Let $A[t]$ be the number of users arriving at the system in time slot $t$. We assume that $A[t], t\geq0$, are  independently and identically distributed (i.i.d.) across time with mean $\lambda\triangleq \mathbb{E}[A[t]]>0$ and bounded second moment $\mathbb{E}[A^2[t]]<\infty$. 
The arrivals model the scenario that users at different locations use their mobile apps in each time slot to acquire information of the PoIs before making decisions. 
Each arriving user will first observe the current records of all PoIs and choose a favorable one (e.g., choosing the least congested route, the lowest gas price, or the cheapest and nearest parking space, etc.). 
However, due to the {\em random} updating time in crowd-learning, the information of some PoI $n$'s record could be old and hence $r_{n}[t]$ may be outdated and inaccurate. 


On the other hand, upon the arrival at his/her chosen PoI, say $n$ in time slot $t$, the user will report the PoI's real-time state (e.g., real-time price, congestion level, etc.), i.e., $p_{n}[t]$. Let $R_n[t]$ denote the number of users that can be served by PoI $n$ in time slot $t$. 
We assume that $R_n[t]$, $t\geq0$, are i.i.d. across time and independently distributed across PoIs with mean $\mu_n\triangleq\mathbb{E}[R_n[t]]>0, \forall n$, and $R_n[t]\leq R_{\max}, \forall n, t$, for some $R_{\max}<\infty$. 
We use $Q_n[t]$ to denote the number of users awaiting for service in PoI $n$ in time slot $t$.

The service provider's goal is to achieve minimum time-average AoI while keeping queueing congestion at each PoI low.
The rationale behind this goal is that low AoI (i.e., fresh information) implies multiple benefits, e.g., high information accuracy, which attracts more users;  hence more advertising revenues due to large user volume, etc.
However, the following toy example shows that the natural greedy behavior of selfish users could yield {\em AoI instability} in mobile crowd-learning:


\smallskip
\underline{{\em A Motivating Example (AoI Instability due to Selfishness):}}
Consider a two-PoI example as shown in Fig.~\ref{fig_2_User}.
Consider the most ``natural'' price-greedy decision made by selfish users:
In time slot $t$, each arriving user compares the recorded prices $r_{1}[t]$ and $r_{2}[t]$ and chooses the cheaper PoI, i.e., choosing $n^{*}[t] \in \arg\min_{n\in\{1,2\}} \{r_{n}[t]\}$.
Suppose that $p_{n}[t] \in [0, p_{\max}]$, $n=1,2$.
Assume that the probability $\mathrm{Pr}\{p_{n}[t] = p_{\max} \} = \epsilon$, $n=1,2$, where $\epsilon>0$ is some small value.
Suppose also that in the initial state, $p_{1}[0] = p_{\max}$ and $p_{2}[0] = \delta < p_{\max}$.
Thus, at $t=0$, {\em all} users choose PoI 2 and the record $r_{2}[t]$ will be updated, in which case the age of PoI 2 in time slot $1$ becomes zero, i.e., $\Delta_{2}[1]=0$.
However, due to the high initial price $p_{1}[0]$, no user chooses PoI 1.
Also, due to the low probability of $p_{2}[t]$ reaching $p_{\max}$, it would take a long time (could be unbounded if $\epsilon$ is arbitrarily small) for PoI 1 to receive {\em any} user to update $r_{1}[\cdot]$, although $p_{1}[t]$ may be lower than $p_{2}[t]$.
For example, in Fig.~\ref{fig_2U_0999}, $p_{1}[t]$ and $p_{2}[t]$ are uniformly distributed in $[0, 1]$.
We let $p_{1}[0] = 0.999$ (large initial value) and $p_{2}[0] = 0.1$. 
Clearly, we can see that PoI 1's AoI is large and grows linearly with respect to time.
\qed

\smallskip
The above observation of AoI instability due to users' selfishness motivates us to design crowd-learning reward mechanisms to ensure information freshness in crowd-learning.

\section{A Linear AoI-Based Reward Mechanism} \label{sec:linear_mechanism}
To keep the AoI being bounded, the service provider would like users to go to and sample a PoI with the most outdated information. 
However, unlike traditional scheduling problems, the crowd-learning service provider cannot enforce each arriving selfish user to go to a certain PoI. 
Rather, the service provider can only offer incentives/rewards to influence the users to choose certain PoIs.
So far, however, the problem of optimal reward mechanism design for mobile crowd-learning with selfish users has not been addressed in the literature.
Therefore, in this paper, we start from considering a simple {\em linear} reward mechanism for mobile crowd-learning.
 
 \begin{figure}[t!]
    \begin{minipage}[t]{0.56\linewidth}
    \vspace{-.99in}
        \includegraphics[width=1.02\textwidth]{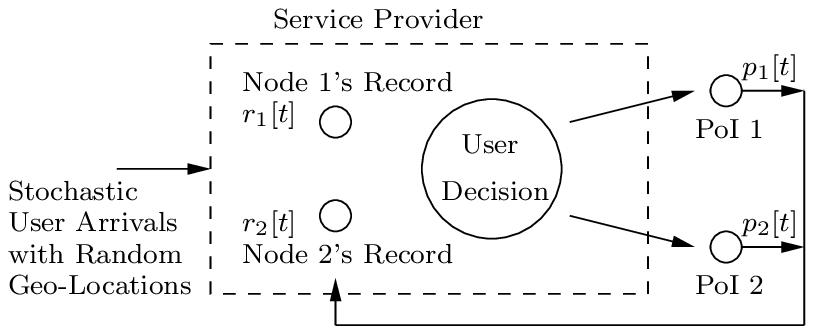}
        \vspace{-0.07in}
        \caption{{\small A two-PoI motivating example with $p_{1}[0] = 0.999$ and $p_{2}[0]=0.1$.}} \label{fig_2_User}
    \end{minipage}%
    \hspace{0.01\linewidth}
    \begin{minipage}[t]{0.4\linewidth}
        \centering
        \includegraphics[width=1\textwidth]{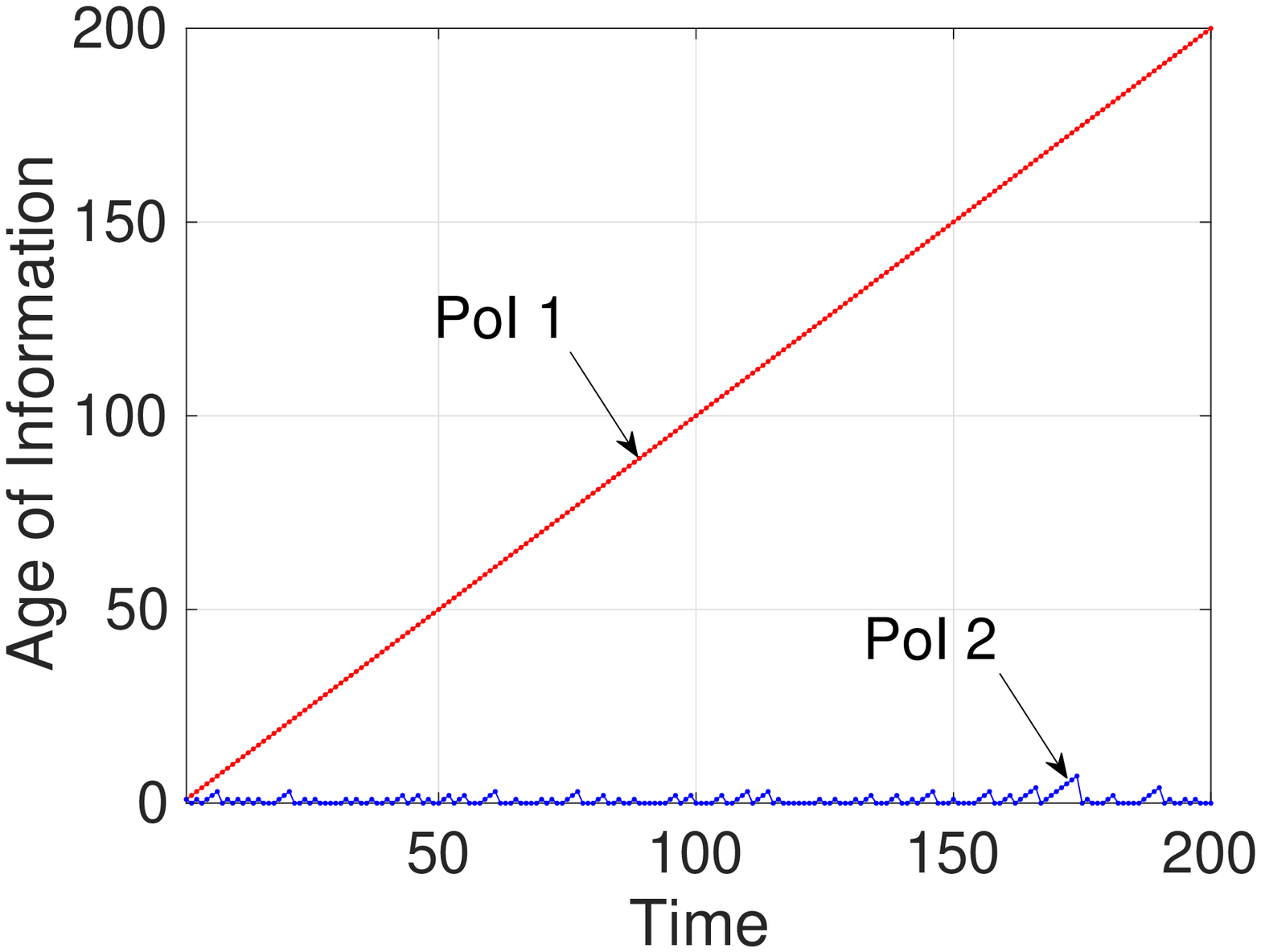}
        \vspace{-.25in}
        \caption{{\small Large and unstable AoI of PoI 1 in Fig.~\ref{fig_2_User}.}}\label{fig_2U_0999}
    \end{minipage}%
\vspace{-.15in}
\end{figure}

Specifically, we let $\beta>0$ represent the ``\emph{reward per unit of age}'' offered by the service provider. Note that each user prefers to select a PoI with both low price and congestion level. We use a parameter $\gamma>0$ to denote users' sensitivity to queueing congestion, which depends on specific mobile crowd-learning application\footnote{Here, we assume that all users are homogeneous and have the same $\gamma$-value. The impact of users' heterogeneity in congestion sensitivity will be studied in our future work.}. Hence, in each time slot $t$, an arriving user's {\em presumed benefit} for choosing PoI $n$ and reporting its state is: $\beta\Delta_n[t]-\gamma Q_n[t]-r_n[t]$.  
In this work, we assume that all arriving users are selfish and rational, so that they would select a PoI $n^*[t]$ to maximize their presumed benefit, i.e.,
\begin{align}
\label{eqn:user:behavior:general}
n^*[t]\in\argmax_{n \in \{ 1,2,\ldots,N \} } \left(\beta\Delta_n[t] - \gamma Q_n[t] - r_n[t]\right), \,\, \forall t.
\end{align}
Note that for any fixed $\gamma$, when the reward rate diminishes, i.e., $\beta\downarrow0$, each user essentially follows the ``greedy" scheme to select a PoI with the smallest value of $\gamma Q_n[t]+r_n[t]$. 
By contrast, when the reward rate approaches infinity, i.e., $\beta\uparrow\infty$, the effects of $Q_n[t]$ and $r_n[t]$ become negligible in users' presumed benefit and thus it encourages users to help the service provider maintain information freshness. 


%
To facilitate our subsequent analysis, we use $S_n[t]$ to denote whether there is at least one user selecting PoI $n$ in time slot  $t$. In particular, $S_{n}[t]=1$ if at least one user selects PoI $n$ in time slot $t$, and $S_{n}[t]=0$ otherwise. Under the assumption of users' selfishness and rationality, the dynamics of queue-length and age of PoI $n$ can be described as follows:
\begin{align}
\label{eqn:Queue}
Q_n[t+1]=\max\{Q_n[t]+A[t]S_n^{*}[t]-R_n[t],0\}, \forall n.
\end{align}
\begin{equation}
\label{eqn:Delta}
\text{and }\Delta_n[t+1] =
  \begin{cases}
       \Delta_n[t]+1, & \text{if $S_{n}^*[t]\id{1}_{\{A[t]>0\}}=0$}, \\
        0,            & \text{otherwise}, 
  \end{cases}
\end{equation}
where $S_n^{*}[t]=1$ if $n=n^*[t]$ and $S_n^{*}[t]=0$ otherwise, and $\id{1}_{\{\cdot\}}$ is an indicator function. Let $\mb{S}^*[t]\triangleq (S_n^*[t])_{n=1}^{N}$.

To understand the impact of users' selfishness on AoI and queueing congestion, in this paper, we adopt the so-called {\em Price of Anarchy} (PoA) metric from the game theory literature, which characterizes  the degradation of the system efficiency due to the selfish behavior of users compared to the optimum. 
Roughly speaking, the notion of PoA $\rho$ is defined as: 
\begin{align} \label{eqn:def:PoA}
\rho = 1 - \frac{\text{Minimum cost}}{\text{Cost under selfish behavior}}. 
\end{align}
Note that $\rho\in[0,1]$ and the smaller the PoA, the more efficient the system under selfish user behavior. 
In what follows, we will analyze the PoA of the linear reward scheme (\ref{eqn:user:behavior:general}), where the definition of cost in (\ref{eqn:def:PoA}) depends on specific system scenarios that will be clarified in subsequent sections.





\section{Price of Anarchy: A Deterministic Case} \label{sec:periodic}

In this section, we first consider a simple deterministic case, where, in each time slot, there is exactly one arriving user and each PoI serves exactly one user if there is any. 
This deterministic case not only provides interesting insights, its results and proof strategies will also serve as a foundation for analyzing general cases with stochastic arrivals and services.
Note that due to the special arrival and service patterns in this deterministic case, there is no queueing effect at each PoI.
Hence, user's selfish selection (cf. \eqref{eqn:user:behavior:general}) becomes:
\begin{align}
\label{eqn:user:behavior:periodic}
n^*[t]\in\argmax_{n\in\{1,2,\ldots,N\}} \left(\beta\Delta_n[t] - r_n[t]\right).
\end{align}
In addition, the evolution of age of PoI $n$ in (\ref{eqn:Delta}) becomes:
\begin{align} \label{eqn:Delta1}
\Delta_n[t+1]=(\Delta_n[t]+1)(1-S_n^*[t]).
\end{align}

Next, we study the information freshness performance under the selfish behavior of users (cf. \eqref{eqn:user:behavior:periodic}) based on the notion of PoA.
Since queueing does not play a role and the system is symmetric, we define the cost function with selfish users under some reward rate $\beta$ as $\ol{\Delta}_{\max}^{(\beta)}$ and hence the PoA is $\rho(\beta)\triangleq 1 - \ol{\Delta}_{\max}^{(\text{OPT})}/\ol{\Delta}_{\max}^{(\beta)}$, where $\ol{\Delta}_{\max}^{(\text{OPT})}$ and $\ol{\Delta}_{\max}^{(\beta)}$ are the average maximum age under an optimal policy (with unselfish users) and under the user's selfishness, respectively. 
The first main result of this paper is stated as follows:

\begin{thm}[AoI-Based PoA for the Deterministic Case]  
\label{thm:periodic}
If there is exactly one user arriving in each time slot and each PoI serves exactly one user per time-slot if there is any, the users' selfishness yields the following PoA performance: 
\begin{align}
\rho(\beta)\leq\frac{p_{\max}}{(N-1)\beta+p_{\max}} = O(1/\beta).
\end{align}
\end{thm}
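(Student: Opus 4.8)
The plan is to bound both the numerator $\ol{\Delta}_{\max}^{(\text{OPT})}$ from below and the denominator $\ol{\Delta}_{\max}^{(\beta)}$ from above, since $\rho(\beta) = 1 - \ol{\Delta}_{\max}^{(\text{OPT})}/\ol{\Delta}_{\max}^{(\beta)}$ increases in the denominator and decreases in the numerator. For the lower bound on the optimum, note that with exactly one user per slot and $N$ PoIs, in the best case the controller round-robins through the PoIs, so the ages cycle through $0,1,\ldots,N-1$ at each PoI; the resulting average maximum age is $(N-1)$ (the oldest PoI just before its turn). Hence $\ol{\Delta}_{\max}^{(\text{OPT})} \geq N-1$, and in fact equals $N-1$ asymptotically. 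So the real work is the upper bound $\ol{\Delta}_{\max}^{(\beta)} \leq (N-1) + p_{\max}/\beta$ (or the corresponding bound in the limit), after which the claimed inequality follows by direct algebra: $\rho(\beta) \leq 1 - (N-1)/\big((N-1) + p_{\max}/\beta\big) = p_{\max}/\big((N-1)\beta + p_{\max}\big)$.

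To upper-bound $\ol{\Delta}_{\max}^{(\beta)}$, I would argue that under the selfish rule \eqref{eqn:user:behavior:periodic} no PoI can stay unsampled for too long. The key observation is a comparison-of-benefits argument: if at time $t$ some PoI $m$ has large age $\Delta_m[t]$, then $\beta\Delta_m[t] - r_m[t] \geq \beta\Delta_m[t] - p_{\max}$, whereas any PoI $n$ that was sampled within the last $k$ slots has $\Delta_n[t] \leq k$, so its selfish score is at most $\beta k - r_n[t] \leq \beta k - 0 = \beta k$ (using $r_n[t] \ge p_{\min} \ge 0$, or more carefully $r_n[t]\in[p_{\min},p_{\max}]$). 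Therefore, as soon as $\beta\Delta_m[t] - p_{\max} > \beta k$, i.e. $\Delta_m[t] > k + p_{\max}/\beta$, PoI $m$ must beat every PoI sampled in the last $k$ slots. Choosing $k = N-1$ and pigeonholing — among any $N-1$ consecutive slots at most $N-1$ distinct PoIs were sampled, so at least one of the $N$ PoIs was \emph{not} sampled in the last $N-1$ slots — one shows that the oldest PoI, once its age exceeds $(N-1) + p_{\max}/\beta$, is guaranteed to be selected. This caps the maximum age at roughly $(N-1) + p_{\max}/\beta$ in steady state; averaging over time gives $\ol{\Delta}_{\max}^{(\beta)} \leq (N-1) + p_{\max}/\beta$ (possibly with a vanishing transient term, which is why the statement uses $O(1/\beta)$ rather than a hard equality).

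The main obstacle I anticipate is making the pigeonhole/comparison argument fully rigorous as a statement about the \emph{time-averaged} maximum age rather than just the peak age. Two subtleties arise: first, several PoIs may be tied in age or in selfish score, and the $\argmax$ in \eqref{eqn:user:behavior:periodic} could break ties adversarially, so the argument must show the oldest PoI is selected \emph{eventually} (within a bounded number of slots) even under worst-case tie-breaking — this likely needs a secondary potential-function or sum-of-ages argument rather than a one-shot comparison. Second, one must handle the initial transient: if some PoI starts with a huge age, it takes $O(1)$ slots (independent of $\beta$) to be cleared, and this contributes only an $o(1)$ term to the time average, but that has to be argued carefully, probably by showing the age process is eventually trapped in a bounded invariant region and the average is dominated by the steady-state behavior. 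Once the peak-age bound and the transient-negligibility are in place, the rest is the elementary algebra above.
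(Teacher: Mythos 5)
Your overall skeleton (lower-bound the optimum by $N-1$, upper-bound the selfish average by $N-1+p_{\max}/\beta$, then do the algebra) is the right one and matches the paper, but both halves have genuine gaps as written. For the lower bound, your reasoning runs in the wrong direction: computing that Round-Robin achieves average maximum age $N-1$ shows $\ol{\Delta}_{\max}^{(\text{OPT})}\leq N-1$, not $\geq N-1$; asserting that ``in the best case the controller round-robins'' presumes RR is optimal, which is exactly what needs proving. The fix is short but must be supplied: either the paper's route (set the steady-state drift of $\sum_n\Delta_n[t]$ to zero to get $\sum_n\bE[\wh{\Delta}_n\wh{S}_n]=N-1$ and use $\max_n\wh{\Delta}_n\geq\sum_n\wh{\Delta}_n\wh{S}_n$ since exactly one $\wh{S}_n$ is $1$), or the combinatorial observation that with one update per slot the last-update times are eventually distinct, so the $N$ ages are $N$ distinct nonnegative integers and $\Delta_{\max}[t]\geq N-1$ pointwise.

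The more serious gap is in the upper bound. Your key claim---that once the \emph{oldest} PoI's age exceeds $(N-1)+p_{\max}/\beta$ it must be selected---is false: the selfish rule \eqref{eqn:user:behavior:periodic} compares scores $\beta\Delta_n[t]-r_n[t]$, so the oldest PoI $m$ can be beaten by \emph{any} PoI $n$ with $\Delta_n[t]\geq\Delta_m[t]-(r_n[t]-r_m[t])/\beta$, i.e., by any of up to $N-1$ other ``old'' PoIs whose ages lie within $p_{\max}/\beta$ of the maximum but whose recorded prices are lower. Your pigeonhole only rules out the \emph{recently sampled} PoIs; it does not prevent the oldest PoI from being repeatedly passed over in favor of slightly younger, cheaper ones, so the peak age is not capped at $(N-1)+p_{\max}/\beta$ (a repaired peak bound would carry an extra factor, something like $(N-1)(1+p_{\max}/\beta)$), and ``average $\leq$ peak'' would then miss the exact constant that the theorem's inequality requires. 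The paper sidesteps all of this---ties, transients, and the peak-versus-average issue you flag---with a one-step Lyapunov drift argument on $V[t]=\sum_n\Delta_n[t]$: since $n^*[t]$ maximizes $\Delta_n[t]-r_n[t]/\beta$, one has $\sum_n(\Delta_n[t]-r_n[t]/\beta)S_n^*[t]\geq\Delta_{\max}[t]-p_{\max}/\beta$ by comparing against the max-age selection, which after telescoping directly yields $\ol{\Delta}_{\max}^{(\beta)}\leq N-1+p_{\max}/\beta$ for the \emph{time average} with no invariant-region or steady-state-trapping argument needed. I would recommend replacing the pigeonhole argument with that drift comparison; as proposed, the upper-bound half does not go through.
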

\begin{proof} 
The proof consists of two main steps: (i) Finding an upper bound on the average maximum age $\ol{\Delta}_{\max}^{(\beta)}$ due to users' selfishness; 
and (ii) derving a lower bound on the average maximum age $\ol{\Delta}_{\max}^{\text{(OPT)}}$ achieved by an optimal policy.

\textbf{Step 1)}: To find an upper bound on the average maximum age due to users' selfish behavior, we perform Lyapunov drift analysis through an age-based Lyapunov function as follows:
\begin{align}
\label{eqn:Lyapunov:periodic}
V[t] \triangleq \sum_{n=1}^{N}\Delta_{n}[t].
\end{align}

Let $\mb{M}[t] \!\triangleq\! (\{\Delta_n[t]\}_{n=1}^{N}, \{r_n[t]\}_{n=1}^{N})$ and consider an unselfish policy $\wt{\mb{S}}[t] \!\triangleq\! (\wt{S}_n[t])_{n=1}^{N} \!\in\! \argmax_{\mb{S}} \sum_{n=1}^{N} \Delta_n[t] S_n[t]$, i.e., users select the PoI with the largest age.
Then, the one-step conditional expected drift of $V[t]$ can be computed as:
\begin{align}
\Delta V[t] \triangleq & \bE\left[V[t+1]-V[t]\middle|\mb{M}[t]\right] \nonumber\\
=&\sum_{n=1}^{N}\bE\left[\Delta_n[t+1]-\Delta_n[t]\middle|\mb{M}[t]\right]  \nonumber\\
\stackrel{(a)}{=}&\sum_{n=1}^{N}\bE\left[1-(\Delta_n[t]+1)S_n^{*}[t]\middle|\mb{M}[t]\right] \nonumber\\
\stackrel{(b)}{=}& N-1 -\sum_{n=1}^{N}\bE\left[\Delta_n[t]S_n^{*}[t]\middle|\mb{M}[t]\right] \label{eqn:Lyapunov:periodic:drift}\\
\leq&N-1 -\sum_{n=1}^{N} \bE\left[\left(\Delta_n[t]-\frac{1}{\beta}r_n[t]\right)S_n^{*}[t] \middle|\mb{M}[t]\right]\nonumber\\
\stackrel{(c)}{\leq}& N-1 -\sum_{n=1}^{N} \bE\left[\left(\Delta_n[t]-\frac{1}{\beta}r_n[t]\right)\wt{S}_n[t] \middle|\mb{M}[t]\right], \nonumber\\
\stackrel{(d)}{\leq}& N-1-\Delta_{\max}[t]+\frac{1}{\beta}p_{\max},\label{eqn:Drift_d_periodic} 
\end{align}
where $(a)$ uses dynamics of $\Delta_n[t]$ in (\ref{eqn:Delta1}); $(b)$ follows from the fact that each user joins one of the PoIs in each time slot, i.e., $\sum_{n=1}^{N}S_n^{*}[t]=1$; $(c)$ follows from the definition of $S_{n}^{*}[t]$;
and $(d)$ uses the fact that $r_n[t]\leq p_{\max},\forall n, t\geq0$, the definition of $\wt{\mb{S}}[t]$, and the fact that exactly one $\wt{S}_n[t]$ is non-zero.  
It then follows from (\ref{eqn:Drift_d_periodic}) that:
\begin{align} \label{eqn:Drift_ucexp}
\bE\left[V[t+1]-V[t]\right]\leq N-1-\bE[\Delta_{\max}[t]]+\frac{1}{\beta}p_{\max}.
\end{align}
Summing (\ref{eqn:Drift_ucexp}) for $t=0,1,2,\ldots, T-1$, we obtain:
\begin{align*}
\bE[V[T]-V[0]] \!\leq\! -\sum_{t=0}^{T-1}\bE[\Delta_{\max}[t]] \!+\! (N-1)T \!+\!\frac{T}{\beta}p_{\max},
\end{align*}
which implies that
\begin{align}
\label{eqn:prop:ub}
\ol{\Delta}_{\max}^{(\beta)}\triangleq\limsup_{T\rightarrow\infty}\frac{1}{T}\sum_{t=0}^{T-1}\bE[\Delta_{\max}[t]]\leq N-1+\frac{1}{\beta}p_{\max}. \!\!\!
\end{align}
\textbf{Step 2)}: Next, we derive a fundamental lower bound on the average maximum age that can be achieved by the optimal policy. 
By using the same Lyapunov function in (\ref{eqn:Lyapunov:periodic}) to compute the conditional expected one-step drift under the optimal policy $\{S_{n}^{(\text{OPT})}[t] \}$ and following similar steps, we have 
$\Delta V[t] = N-1-\sum_{n=1}^{N}\bE[\Delta_n^{(\text{OPT})}[t]S_n^{(\text{OPT})}[t]|\mb{M}[t]]$,
where $\Delta_n^{(\text{OPT})}[t]$ is the age of PoI $n$ in time slot $t$ under the optimal policy. In Step 1, we have already shown that the average maximum age is finite under the selfish policy. This readily implies that the average maximum age is also finite under the optimal policy. Therefore, $\bE[\Delta V[t]]$ will be equal to zero in steady-state and thus we have 
\begin{align}
\label{eqn:prop:periodic:lb}
\sum_{n=1}^{N}\bE[\wh{\Delta}_n^{(\text{OPT})}\wh{S}_n^{(\text{OPT})}]=N-1,
\end{align}
where $\wh{\Delta}_n^{(\text{OPT})}$ and $\wh{S}_n^{(\text{OPT})}$ are random variables with the same distribution as $\Delta_n^{(\text{OPT})}[t]$ and $S_n^{(\text{OPT})}[t]$ in steady-state under the optimal policy, respectively. Hence, we have 
\begin{align} \label{eqn:Age_LB_final}
\ol{\Delta}_{\max}^{(\text{OPT})}&\stackrel{(a)}{=}\bE[\wh{\Delta}_{\max}^{(\text{OPT})}] \stackrel{(b)}{=}\bE[\max_{n}\wh{\Delta}_n^{(\text{OPT})}]\nonumber\\
&\stackrel{(c)}{\geq}\sum_{n=1}^{N}\bE[\wh{\Delta}_n^{(\text{OPT})}\wh{S}_n^{(\text{OPT})}]\stackrel{(d)}{=}N-1,
\end{align}
where step $(a)$ follows from the boundedness of the average maximum age under the optimal policy; $(b)$ is true for $\wh{\Delta}_{\max}^{(\text{OPT})}\triangleq\max_n\wh{\Delta}_n^{(\text{OPT})}$; $(c)$ follows from the fact that each arriving user joins exactly one of the PoIs, i.e., $\sum_{n=1}^{N}\wh{S}_n^{(\text{OPT})}=1$; and $(d)$ uses \eqref{eqn:prop:periodic:lb}.
Lastly, by combining the upper bound in Step 1 and lower bound in Step 2, the desired PoA result in Theorem~\ref{thm:periodic} follows and the proof is complete.
\end{proof}

\begin{rem}{\em
Two insightful remarks for Theorem~\ref{thm:periodic} are in order:
i) In Step 2, the lower bound of $\Delta_{\max}^{\text{(OPT)}}$ is {\em tight} and can be achieved by the Round-Robin policy, i.e., the system guides each arriving user to the PoIs in a Round-Robin fashion. 
Indeed, under Round-Robin, the ages of PoIs are a permutation of $\{0,1,2,\ldots,N-1\}$ in each time slot, and hence the maximum age under Round-Robin is $\Delta_{\max}^{\text{(RR)}}[t]=N-1, \forall t\geq0$, which implies that $\ol{\Delta}_{\max}^{\text{(RR)}}=N-1 = \ol{\Delta}_{\max}^{\text{(OPT)}}$; 
ii) From Theorem~\ref{thm:periodic}, we can observe that if $\beta$ increases asymptotically (i.e., $\beta\uparrow\infty$), we have $\rho(\beta)\downarrow 0$. 
This implies that the system is optimal and mimicking Round-Robin when the service provider increases the incentive asymptotically.
On the other hand, if $\beta$ reduces to zero (i.e., $\beta\downarrow0$), we can see from (\ref{eqn:user:behavior:periodic}) that each user just follows a price-greedy strategy.
In this case, Theorem~\ref{thm:periodic} suggests that the upper bound of $\rho(\beta)$ approaches $1$, which is consistent with our observation (cf. motivating example in Section~\ref{sec:model}) that the system suffers a poor AoI performance and potentially AoI instability (i.e., $\ol{\Delta}_{\max}^{(\beta)} \uparrow \infty$).}
\end{rem}


\section{Price of Anarchy: Stochastic Cases} \label{sec:general}

Based on the results for the deterministic case, we are now in a position to analyze the AoI and congestion performances under users' selfishness in cases with stochastic arrivals and services.
To facilitate analysis, we define a parameter $q\!\triangleq\!\mathrm{Pr}\{A[t]\!>\!0\}$ for the arrivals, which is strictly positive for $\lambda\triangleq\bE[A[t]]>0$. Let $\mu_{\Sigma}\triangleq\sum_{n=1}^{N}\mu_n$. 
Here, we adopt the cost function $J(\beta,\gamma) \triangleq \frac{\gamma\epsilon}{N}\sum_{n=1}^{N}\ol{Q}_n + \beta\sum_{n=1}^{N} \frac{\mu_{n}}{\mu_{\Sigma}} \ol{\Delta}_{n}$, where $\epsilon>0$ satisfies $\mu_n/\lambda\geq\mu_n/\mu_{\Sigma}+\epsilon/N, \forall n=1,2,\ldots,N$ due to the fact that $\lambda<\mu_{\Sigma}$ (necessary for guaranteeing the system's queueing stability\footnote{In this paper, we say that a queue $n$ is \emph{stable} if its average queue-length is finite, i.e., $\limsup_{T\rightarrow\infty}\frac{1}{T} \sum_{t=0}^{T-1} \bE[Q_n[t]]<\infty$. A system is \emph{stable} if all its queues are stable.}), and $\ol{Q}_n$ and $\ol{\Delta}_n$ are the average queue-length and average age of PoI $n$ under the user's selfishness, respectively. We note that in $J(\beta,\gamma)$, $\epsilon$ is used as a scaling parameter to reduce the cost's sensitivity to average queue-length $\frac{1}{N}\sum_{n=1}^{N}\ol{Q}_n$ under different arrival rates $\lambda$.
Also, $\gamma$ and $\beta$ are used to emphasize the relative importance between queueing and AoI costs, as in the presumed benefit for users' selfish decisions (cf. \eqref{eqn:user:behavior:general}).
Also, note that $J(\beta,\gamma)$ is based on weighted average age, where the weight $\frac{\mu_{n}}{\mu_{\Sigma}}$ is used to ``equalize'' the different AoI scales caused by the heterogeneity of the PoIs\footnote{This is also motivated by the fact that the service provider prefers a better AoI for the PoI with a faster service rate.}. 
As a result, the PoA is specialized to $\rho(\beta,\gamma)\triangleq 1-\frac{J^{\text{(OPT)}}(\beta,\gamma)}{J(\beta,\gamma)}$.
Our second key result is for the stochastic cases and stated as follows:


\begin{thm}[Joint AoI-Congestion PoA of Stochastic Cases]
\label{thm:general:asym}
If $\lambda<\mu_{\Sigma}$, then there exists an $\epsilon >0$ satisfying $\mu_n/\lambda\geq\mu_n/\mu_{\Sigma}+\epsilon/N, \forall n=1,2,\ldots,N$. In such a case, the users' selfishness yields the following PoA performance:
\begin{multline}
\rho(\beta,\gamma)\leq \frac{B(\gamma)-\gamma M + p_{\max}}{B(\gamma)+\beta\left(\frac{N}{q}-1\right)+ p_{\max}}\\
+\frac{\beta\left(\frac{N}{q}-\frac{1}{2q\mu_{\max}}\sum_{n=1}^{N}\mu_n-\frac{1}{2}\right)}{B(\gamma)+\beta\left(\frac{N}{q}-1\right)+p_{\max}},
\end{multline}
where $B(\gamma) \triangleq \frac{\gamma}{2\lambda}\big(\bE[A^2[t]]+\sum_{n=1}^{N}\bE[R_n^2[t]]\big)$, $M\triangleq\frac{\epsilon}{2N(\mu_{\Sigma}-\lambda)}\big(\text{Var}(A[t])+\sum_{n=1}^{N}\text{Var}(R_n[t])+(\mu_{\Sigma}-\lambda)^2\big)-\frac{1}{2}\epsilon R_{\max}$, and $\mu_{\max}\triangleq\max_{n}\mu_n$.
\end{thm}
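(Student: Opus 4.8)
The plan is to reuse the two-step template from the proof of Theorem~\ref{thm:periodic}: first upper-bound the cost $J(\beta,\gamma)$ incurred under the selfish behavior~\eqref{eqn:user:behavior:general}, then lower-bound the cost $J^{(\text{OPT})}(\beta,\gamma)$ of an arbitrary unselfish stabilizing policy, and finally combine the two via $\rho(\beta,\gamma) = 1 - J^{(\text{OPT})}/J \le 1 - J^{(\text{OPT})}_{\mathrm{LB}}/J_{\mathrm{UB}}$. Concretely, I will aim to show $J_{\mathrm{UB}} = B(\gamma) + \beta(\tfrac{N}{q}-1) + p_{\max}$ --- which is exactly the common denominator in the statement --- and $J^{(\text{OPT})}_{\mathrm{LB}} = \gamma M + \beta\big(\tfrac{1}{2q\mu_{\max}}\sum_{n=1}^{N}\mu_n - \tfrac12\big)$; subtracting $J^{(\text{OPT})}_{\mathrm{LB}}$ from $J_{\mathrm{UB}}$ and regrouping the numerator then produces the two displayed fractions verbatim, so the final combination (Step~3 below) is just algebra.

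\textbf{Step 1 (upper bound on $J$).} I would run a Lyapunov-drift argument with the combined function $V[t] \triangleq \tfrac{\gamma}{2}\sum_{n=1}^{N} Q_n^2[t] + \tfrac{\beta\lambda}{q}\sum_{n=1}^{N}\Delta_n[t]$, conditioning on the current state (now also including the queue lengths). The quadratic part contributes the usual ``noise'' term $\tfrac{\gamma}{2}\big(\bE[A^2[t]] + \sum_n\bE[R_n^2[t]]\big)$ (after discarding a nonpositive cross term) plus $\gamma\sum_n Q_n[t](\lambda S_n^*[t]-\mu_n)$, while the age part, using $\bE[S_n^*[t]\,\id{1}_{\{A[t]>0\}}\mid\text{state}] = q\,S_n^*[t]$, contributes $\tfrac{\beta\lambda}{q}N - \beta\lambda\sum_n(\Delta_n[t]+1)S_n^*[t]$. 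The weight $\lambda/q$ is chosen precisely so that the cross term $\gamma\lambda\sum_n Q_n[t]S_n^*[t]$ fuses with $-\beta\lambda\sum_n\Delta_n[t]S_n^*[t]$ into $-\lambda\sum_n(\beta\Delta_n[t]-\gamma Q_n[t])S_n^*[t]$; adding and subtracting $r_n[t]$, invoking the selfish optimality~\eqref{eqn:user:behavior:general} (which holds for \emph{every} $n$, hence against any test distribution) with $\pi_n = \mu_n/\mu_\Sigma$, and using $r_n\le p_{\max}$ gives $\sum_n(\beta\Delta_n-\gamma Q_n-r_n)S_n^* \ge \tfrac{1}{\mu_\Sigma}\sum_n\mu_n(\beta\Delta_n-\gamma Q_n) - p_{\max}$. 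This single inequality does double duty: it produces the $\mu$-weighted age term $-\beta\lambda\sum_n\tfrac{\mu_n}{\mu_\Sigma}\Delta_n$, and, together with the $-\gamma\sum_n\mu_n Q_n$ already present, the negative drift $-\gamma\tfrac{\mu_\Sigma-\lambda}{\mu_\Sigma}\sum_n\mu_n Q_n$, which by the slack hypothesis $\mu_n/\lambda\ge\mu_n/\mu_\Sigma+\epsilon/N$ (equivalently $\mu_n\tfrac{\mu_\Sigma-\lambda}{\mu_\Sigma}\ge\tfrac{\lambda\epsilon}{N}$) is at most $-\tfrac{\lambda\gamma\epsilon}{N}\sum_n Q_n$. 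Collecting terms and writing the constant as $\tfrac{\beta\lambda}{q}N-\beta\lambda = \beta\lambda(\tfrac{N}{q}-1)$, the one-step drift is at most $\tfrac{\gamma}{2}\big(\bE[A^2]+\sum_n\bE[R_n^2]\big) + \beta\lambda(\tfrac{N}{q}-1) + \lambda p_{\max} - \lambda\big(\beta\sum_n\tfrac{\mu_n}{\mu_\Sigma}\Delta_n + \tfrac{\gamma\epsilon}{N}\sum_n Q_n\big)$; taking expectations, telescoping over $t=0,\dots,T-1$, dividing by $T$ and then by $\lambda$, and letting $T\to\infty$ (after the routine bootstrapping that forces $\bE[V[T]]/T\to0$ by first using the same inequality to establish stability) yields $J(\beta,\gamma) \le B(\gamma) + \beta(\tfrac{N}{q}-1) + p_{\max}$, since $\tfrac{\gamma}{2\lambda}\big(\bE[A^2]+\sum_n\bE[R_n^2]\big) = B(\gamma)$.

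\textbf{Step 2 (lower bound on $J^{(\text{OPT})}$), and Step 3 (combine).} Here I would lower-bound the queueing and age parts of $J^{(\text{OPT})}$ separately. Summing the queue dynamics~\eqref{eqn:Queue} and using $\sum_n S_n^{(\text{OPT})}[t]=1$ gives $\sum_n Q_n[t+1] \ge \max\{\sum_n Q_n[t] + A[t] - R_\Sigma[t],\,0\}$ with $R_\Sigma[t]\triangleq\sum_n R_n[t]$, so by a pathwise comparison $\sum_n Q_n[t]$ dominates the single-server queue $\wh{Q}[t]$ driven by arrivals $A[t]$ and aggregate service $R_\Sigma[t]$; since $\lambda<\mu_\Sigma$ this queue is stable, and a Kingman/heavy-traffic-type lower bound on $\bE[\wh{Q}]$ --- obtained from the steady-state identity for $\bE[\wh{Q}^2]$, rate conservation $\bE[\wh{U}]=\mu_\Sigma-\lambda$ for the unused service $\wh{U}$, the bound $\wh{U}\le R_\Sigma\le NR_{\max}$, and $\text{Var}(R_\Sigma)=\sum_n\text{Var}(R_n)$ by independence across PoIs --- yields $\tfrac{\gamma\epsilon}{N}\sum_n\ol{Q}_n^{(\text{OPT})} \ge \gamma M$. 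For the age part, note that, exactly as in the selfish dynamics, at most one PoI is refreshed per slot and only when $A[t]>0$, so the long-run refresh frequencies $f_n$ of the PoIs under the optimal policy satisfy $\sum_n f_n = q$; a renewal/convexity (Jensen) argument on the age sample paths shows that a PoI refreshed with frequency $f_n$ has average age at least $\tfrac{1}{2f_n}-\tfrac12$, whence $\sum_n\tfrac{\mu_n}{\mu_\Sigma}\ol{\Delta}_n^{(\text{OPT})} \ge \tfrac{1}{2\mu_\Sigma}\sum_n\tfrac{\mu_n}{f_n} - \tfrac12 \ge \tfrac{1}{2q\mu_\Sigma}\big(\sum_n\sqrt{\mu_n}\big)^2 - \tfrac12 \ge \tfrac{1}{2q\mu_{\max}}\sum_n\mu_n - \tfrac12$, the middle inequality being Cauchy--Schwarz under the constraint $\sum_n f_n=q$ and the last using $\mu_{\max}\ge\mu_n$. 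Adding the two parts gives $J^{(\text{OPT})}\ge J^{(\text{OPT})}_{\mathrm{LB}}$ as claimed, and Step~3 is the substitution $\rho = 1-J^{(\text{OPT})}/J \le (J_{\mathrm{UB}}-J^{(\text{OPT})}_{\mathrm{LB}})/J_{\mathrm{UB}}$ followed by regrouping the numerator as $\big(B(\gamma)-\gamma M+p_{\max}\big) + \beta\big(\tfrac{N}{q}-\tfrac{1}{2q\mu_{\max}}\sum_n\mu_n-\tfrac12\big)$.

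\textbf{Main obstacle.} The crux is Step~1: the drift of $\sum_n Q_n^2$ inevitably produces a $+\lambda\sum_n Q_n S_n^*$ term whose sign is ``wrong'' for a direct appeal to congestion-aversion, and the whole argument hinges on two coordinated choices --- weighting the age component of $V[t]$ by $\lambda/q$ so that this term fuses with the age-drift term into exactly the selfish benefit $\beta\Delta_n-\gamma Q_n$, and testing selfish optimality against the distribution $\pi_n=\mu_n/\mu_\Sigma$ (the same weighting that defines the cost), so that the leftover $(\lambda/\mu_\Sigma-1)\sum_n\mu_n Q_n$ is precisely what the slack condition converts into the stabilizing drift $-\tfrac{\lambda\gamma\epsilon}{N}\sum_n Q_n$. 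Carrying the $\lambda,q,\mu$ bookkeeping through so that the constants land exactly on $B(\gamma)$, $\beta(N/q-1)$ and $p_{\max}$ is the delicate part, together with the standard stability/boundedness bootstrapping needed to discard the $\bE[V[T]]/T$ term. By comparison, Step~2 is more routine, although pinning down the explicit constant $M$ --- in particular the $\tfrac12\epsilon R_{\max}$ term --- requires care when bounding the unused-service contributions in the Kingman-type estimate.
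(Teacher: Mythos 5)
Your proposal is correct and follows the paper's architecture almost exactly: the Step~1 Lyapunov function $\tfrac{\gamma}{2}\sum_n Q_n^2 + \tfrac{\beta\lambda}{q}\sum_n\Delta_n$ is the paper's $L[t]$ scaled by $\lambda\beta$, and the key moves --- fusing the $Q_nS_n^*$ cross term with the age-drift term into the selfish benefit, testing the argmax against the stationary randomized policy $\pi_n=\mu_n/\mu_\Sigma$, and converting the slack $\mu_n/\lambda\geq\mu_n/\mu_\Sigma+\epsilon/N$ into the $-\tfrac{\gamma\epsilon}{N}\sum_n Q_n$ drift --- are precisely the paper's, as is the resource-pooling comparison to a single-server queue for the $\gamma M$ term. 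The one place you genuinely diverge is the age lower bound in Step~2: the paper sets the steady-state drifts of $V_1=\sum_n\mu_n\Delta_n$ and $V_2=\sum_n\mu_n\Delta_n^2$ to zero and then applies Cauchy--Schwarz and Jensen to the resulting expectation identities, whereas you argue pathwise via renewal-reward (average age over an inter-refresh interval of mean $1/f_n$ is at least $\tfrac{1}{2f_n}-\tfrac12$) and then apply Cauchy--Schwarz to the refresh frequencies under $\sum_n f_n=q$. Both routes land on the identical constant $\tfrac{\mu_\Sigma}{2}\big(\tfrac{\mu_\Sigma}{q\mu_{\max}}-1\big)$; your version is arguably more transparent about where the $-\tfrac12$ comes from, while the paper's stays uniformly within the Lyapunov-drift framework it uses everywhere else. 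No gaps.
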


\begin{proof}
Similar to the proof of Theorem~\ref{thm:periodic}, we first find an upper bound on $J(\beta,\gamma)$ by using the Lyapunov drift analysis and then determine a fundamental lower bound on $J^{\text{(OPT)}}(\beta,\gamma)$.

\textbf{Step 1)}: Consider the following Lyapunov function: $
L[t]\triangleq \frac{\gamma}{2\lambda\beta}\sum_{n=1}^{N}Q_n^2[t] + \frac{1}{q}\sum_{n=1}^{N}\Delta_n[t]$.
Let $\mb{Z}[t]\triangleq ((Q_n[t])_{n=1}^{N}, (\Delta_n[t])_{n=1}^{N},(r_n[t])_{n=1}^{N})$.
Then, the one-step conditional expected drift can be computed as:
\begin{small}
\begin{align}
&\Delta L[t] \triangleq \bE\left[L[t+1]-L[t]\middle|\mb{Z}[t]\right] = \! \bE \bigg[\frac{\gamma}{2\lambda\beta}\sum_{n=1}^{N} \big(Q_n^2[t + 1] \nonumber\\
& \qquad - Q_n^2[t]\big) + \frac{1}{q} \sum_{n=1}^{N} \! (\Delta_n[t + 1] \!-\!\Delta_n[t] ) \bigg|\mb{Z}[t]\bigg]  \nonumber\\
&\stackrel{(a)}{\leq} \frac{B(\gamma)}{\beta} + \bE\bigg[\frac{\gamma}{\lambda\beta}\sum_{n=1}^{N}Q_n[t](A[t]S_n^{*}[t]-R_n[t]), \nonumber\\
\label{eqn:one-step_Lyapunov_sym} &\qquad+\frac{1}{q}\sum_{n=1}^{N}\left(1-(\Delta_n[t]+1)S_n^{*}[t]\id{1}_{\{A[t]>0\}}\right)\bigg|\mb{Z}[t]\bigg],
\end{align}
\end{small}
where $(a)$ is true for $B(\gamma)=\frac{\gamma}{2\lambda}(\bE[A^2[t]]+N R_{\max}^2)<\infty$, and uses dynamics of $Q_n[t]$ (cf. \eqref{eqn:Queue}) and $\Delta_n[t]$ (cf. \eqref{eqn:Delta}), and the fact that $(\max\{x,0\})^2\leq x^2, \forall x$.

Next, we let $\mb{Z}'[t]\triangleq(\mb{Z}[t],\id{1}_{\{A[t]>0\}})$.
Then, for any function $f(\mb{Z}[t])$, the following sequence of equalities holds:
\begin{align}
\bE\{f(\mb{Z}[t])A[t]|\mb{Z}[t]\}=&q\bE\{f(\mb{Z}[t])A[t]|\mb{Z}'[t]\} \nonumber\\
=&q\bE\{A[t]|A[t]>0\} \bE\{f(\mb{Z}[t])|\mb{Z}'[t]\} \nonumber\\
\label{eqn_condexp_eq} =&\bE\{A[t]\} \bE\{ f(\mb{Z}[t])|\mb{Z}'[t]\}.
\end{align}
Note that each arriving user joins one of the PoIs in each time slot, i.e., $\sum_{n=1}^{N}S_n^{*}[t]=1$.
Also, the users' decisions $\mb{S}^*[t]$ only depend on $\mb{Z}[t]$.
Hence, we have that:
\begin{align}
\!\!\!\!\! (\ref{eqn:one-step_Lyapunov_sym}) \stackrel{(a)}{\leq}&\frac{B(\gamma)}{\beta} + \frac{N}{q} -1 -\frac{\gamma}{\lambda\beta}\sum_{n=1}^{N} \mu_{n} Q_n[t] \nonumber\\
\quad+&\bE \bigg[\frac{\gamma}{\beta}\sum_{n=1}^{N}Q_n[t]S_n^{*}[t]-\sum_{n=1}^{N}\Delta_n[t]S_n^{*}[t]\bigg|\mb{Z}'[t] \bigg] \nonumber\\
\leq& \frac{B(\gamma)}{\beta}+\frac{N}{q}-1 -\frac{\gamma}{\lambda\beta}\sum_{n=1}^{N}\mu_nQ_n[t]   \nonumber\\
\label{eqn:one-step_Lyapunov_sym1} \quad-&\bE \bigg[\sum_{n=1}^{N}\left(\Delta_n[t]-\frac{\gamma}{\beta} Q_n[t]-\frac{1}{\beta}r_n[t]\right)S_n^{*}[t]\bigg|\mb{Z}'[t] \bigg], \!\!\!
\end{align}
where $(a)$ follows from (\ref{eqn_condexp_eq}) and the fact that $q\in (0,1)$.
Next, consider an unselfish stationary randomized policy with $\bE[\wt{S}_n[t]]=\mu_n/\mu_{\Sigma}$, $\forall n$, if $A[t]>0$, and $\mu_{\Sigma}\triangleq\sum_{n=1}^{N}\mu_n$.
Clearly, from the definition of $\mb{S}^{*}[t]$, we have:
\begin{align}
&(\ref{eqn:one-step_Lyapunov_sym1}) \leq \frac{B(\gamma)}{\beta}+\frac{N}{q}-1-\frac{\gamma}{\beta}\sum_{n=1}^{N} \frac{\mu_{n}}{\lambda} Q_n[t] \nonumber\\
&\label{eqn:one-step_Lyapunov_sym2} -\bE \bigg[\sum_{n=1}^{N}\left(\Delta_n[t]-\frac{\gamma}{\beta} Q_n[t]-\frac{1}{\beta}r_n[t]\right)\wt{S}_n[t]\bigg|\mb{Z}'[t]\bigg],
\end{align}
Noting that $\mu_n/\lambda\geq \mu_{n}/\mu_{\Sigma} + \epsilon/N, \forall n$, we have
\begin{align*}
&\Delta L[t]\stackrel{(a)}{\leq} \frac{B(\gamma)}{\beta}+\frac{N}{q}-1+\frac{p_{\max}}{\beta}-\frac{\gamma\epsilon}{N\beta}\sum_{n=1}^{N} Q_n[t]\nonumber\\
&-\frac{\gamma}{\beta}\sum_{n=1}^{N}\frac{\mu_n}{\mu_{\Sigma}}Q_n[t]-\sum_{n=1}^{N}\frac{\mu_n}{\mu_{\Sigma}}\bE\left[\left(\Delta_n[t]-\frac{\gamma}{\beta} Q_n[t]\right)\middle|\mb{Z}'[t]\right] \nonumber\\
=&\!-\! \frac{\gamma\epsilon}{N\beta}\sum_{n=1}^{N} Q_n[t] \!-\! \sum_{n=1}^{N}\frac{\mu_n}{\mu_{\Sigma}}\Delta_n[t] \!+\! \frac{B(\gamma)}{\beta} \!+\! \frac{N}{q} \!-\!1 \!+\! \frac{p_{\max}}{\beta},
\end{align*}
where $(a)$ follows from $r_n[t]\leq p_{\max},\forall n,t$, and the definition of the stationary randomized policy $\{\wt{\mb{S}}[t]\}_{t\geq0}$. 
This implies 
\begin{multline} \label{eqn:uc_exp_asym}
\bE\left[L[t+1]-L[t]\right] \leq -\frac{\gamma\epsilon}{N\beta}\sum_{n=1}^{N}\bE[Q_n[t]]\\
- \sum_{n=1}^{N} \frac{\mu_n}{\mu_{\Sigma}} \bE[\Delta_{n}[t]] +\frac{B(\gamma)}{\beta}+\frac{N}{q}-1+\frac{p_{\max}}{\beta}.
\end{multline}
Summing (\ref{eqn:uc_exp_asym}) for $t=0,1,2,\ldots, T-1$, we obtain 
\begin{align*}
&\bE[L[T]-L[0]]\leq-\frac{\gamma\epsilon}{N\beta} \sum_{t=0}^{T-1}\sum_{n=1}^{N}\bE[Q_n[t]]\nonumber\\
&- \sum_{t=0}^{T-1}\sum_{n=1}^{N} \frac{\mu_n}{\mu_{\Sigma}} \bE[\Delta_{n}[t]]+\left(\frac{B(\gamma)}{\beta}+\frac{N}{q}-1+\frac{p_{\max}}{\beta}\right)T,
\end{align*}
which further implies the following upper bound on $J(\beta,\gamma)$:
\begin{align}
\label{eqn:prop:stochastoc:ub}
J(\beta,\gamma) \! \triangleq\!&\limsup_{T\rightarrow\infty}\frac{1}{T} \!\!\sum_{t=0}^{T-1} \!\! \left[\frac{\gamma\epsilon}{N}\sum_{n=1}^{N} \bE[Q_n[t]] \!+\! \beta\sum_{n=1}^{N} \frac{\mu_n}{\mu_{\Sigma}} \bE[\Delta_{n}[t]]\right] \nonumber\\
\leq&B(\gamma)+\beta\left(\frac{N}{q}-1\right)+p_{\max}.
\end{align}

\textbf{Step 2)}: Next, we derive a fundamental lower bound on $J^{\text{(OPT)}}(\beta,\gamma)$. Since we have shown that $J(\beta,\gamma)$ is upper-bounded under the selfish policy in Step 1, $J^{\text{(OPT)}}(\beta,\gamma)$ is also bounded under the optimal policy. Therefore, we have $J^{\text{(OPT)}}(\beta,\gamma)=\frac{\gamma\epsilon}{N}\sum_{n=1}^{N}\bE[\wh{Q}_n^{\text{(OPT)}}]+\beta\sum_{n=1}^{N}\frac{\mu_n}{\mu_{\Sigma}}\bE[\wh{\Delta}_n^{\text{(OPT)}}]$, where 
$\wh{Q}_n^{\text{(OPT)}}$ and $\wh{\Delta}_n^{\text{(OPT)}}$ are random variables with the same distribution as $Q_n[t]$ and $\Delta_n[t]$ in steady-state under the optimal policy, respectively.  Next, we lower-bound $\sum_{n=1}^{N}\bE[\wh{Q}_n^{\text{(OPT)}}]$ and $\sum_{n=1}^{N}\mu_n\bE[\wh{\Delta}_n^{\text{(OPT)}}]$ individually. 
In the rest of the proof, we omit the signifier ``$(\text{OPT})$'' for notational convenience and better readability. 

We first consider $\sum_{n=1}^{N}\mu_n\bE[\wh{\Delta}_n]$. 
By choosing the Lyapunov function $V_1[t]\triangleq\sum_{n=1}^{N}\mu_n\Delta_n[t]$ and following similar steps as in the derivation of \eqref{eqn:Lyapunov:periodic:drift}, we have 
\begin{align*}
&\!\! \Delta V_1[t] \!\triangleq\! \bE\left[V_1[t+1] \!-\! V_1[t]\middle|\mb{Z}'[t]\right]=\mu_{\Sigma}-\nonumber\\
&\qquad\quad q\sum_{n=1}^{N}\! \mu_n\bE\{S_n[t]|\mb{Z}'[t]\} \!-\! q\!\sum_{n=1}^{N} \!  \mu_n\bE\{\Delta_n[t]S_n[t]|\mb{Z}'[t]\}.
\end{align*}
Since $J^{(\text{OPT})}(\beta,\gamma)$ is bounded under the optimal policy, the weighted sum of average age must also be finite under the optimal policy. Therefore, one can conclude that $\bE[\Delta V_1[t]]=0$ in steady-state.
It then follows that:
\begin{align}
\label{eqn:prop:general:asym:lb:first}
\sum_{n=1}^{N}\mu_n\bE\left[\wh{\Delta}_n\wh{S}_n\right]=\frac{1}{q}\mu_{\Sigma}-\sum_{n=1}^{N}\mu_n\bE[\wh{S}_n],
\end{align}  
where $\wh{S}_n$ is the random variable with the same distribution as $S_n[t]$ in the steady-state under the optimal policy. 

Similarly, using Lyapunov function $V_2[t] \!\!\triangleq\!\! \sum_{n=1}^{N}\!\mu_n \Delta_n^2[t]$ and setting its drift to zero in steady-state yields:
\begin{align}
\label{eqn:prop:general:asym:lb:second}
2\sum_{n=1}^{N}\mu_n\bE[\wh{\Delta}_n] \!=\! q\sum_{n=1}^{N}\mu_n\bE[\wh{\Delta}_n^2\wh{S}_n] \!+\! q\sum_{n=1}^{N}\mu_n\bE[\wh{\Delta}_n\wh{S}_n]. \!\!\!
\end{align}
For any sample path, by Cauchy-Schwarz's Inequality, we have 
\begin{align}
\bigg(\sum_{n=1}^{N}\mu_n\wh{\Delta}_n\wh{S}_n\bigg)^2=&\bigg(\sum_{n=1}^{N}\sqrt{\mu_n\wh{S}_n}\cdot\sqrt{\mu_n\wh{S}_n}\wh{\Delta}_n\bigg)^2 \nonumber\\
\leq&\bigg(\sum_{n=1}^{N}\mu_n\wh{S}_n\bigg) \bigg(\sum_{n=1}^{N}\mu_n\wh{\Delta}_n^2\wh{S}_n\bigg),
\end{align}
which implies $\sum_{n=1}^{N}\mu_n\wh{\Delta}_n^2\wh{S}_n\geq\frac{(\sum_{n=1}^{N}\mu_n\wh{\Delta}_n\wh{S}_n)^2}{\sum_{n=1}^{N}\mu_n\wh{S}_n}$, and hence
\begin{align}
\label{eqn:prop:general:asym:lb:temp}
\bE\left[\sum_{n=1}^{N}\mu_n\wh{\Delta}_n^2\wh{S}_n\right]\geq \bE\Bigg[\frac{\left(\sum_{n=1}^{N}\mu_n\wh{\Delta}_n\wh{S}_n\right)^2}{\sum_{n=1}^{N}\mu_n\wh{S}_n}\Bigg].
\end{align}
Since $f(X,Y)=X^2/Y$ is convex for all $X \geq 0$ and $Y>0$, by using Jensen's Inequality, we have $\bE[\frac{X^2}{Y}] \geq \frac{(\bE[X])^2}{\bE[Y]}$.
Thus, setting $X=\sum_{n=1}^{N}\mu_n\wh{\Delta}_n\wh{S}_n$ and $Y=\sum_{n=1}^{N}\mu_n\wh{S}_n$, inequality \eqref{eqn:prop:general:asym:lb:temp} becomes:
\begin{align}
\label{eqn:prop:general:asym:lb:third}
\sum_{n=1}^{N}\mu_n\bE\left[\wh{\Delta}_n^2\wh{S}_n\right]\geq\frac{\left(\sum_{n=1}^{N}\mu_n\bE[\wh{\Delta}_n\wh{S}_n]\right)^2}{\sum_{n=1}^{N}\mu_n\bE[\wh{S}_n]}.
\end{align}
By combining \eqref{eqn:prop:general:asym:lb:first}, \eqref{eqn:prop:general:asym:lb:second} and \eqref{eqn:prop:general:asym:lb:third}, we have: 
\begin{align} \label{eqn:AgeLB_asym}
\sum_{n=1}^{N} \mu_n\bE[\wh{\Delta}_n] \geq& \frac{\mu_{\Sigma}}{2} \bigg[ \frac{\mu_{\Sigma}}{q\sum_{n=1}^{N} \mu_n\bE[\wh{S}_n]} -1\bigg]\nonumber\\
\geq& \frac{\mu_{\Sigma}}{2}\!\bigg[\frac{\mu_{\Sigma}}{q\mu_{\max}} - 1\bigg], 
\end{align}
where the last step is true for $\mu_{\max}\triangleq\max_{n}\mu_n$.

In order to lower-bound $\sum_{n=1}^{N}\bE[\wh{Q}_n]$, we construct a hypothetical single-server queue $\{\Phi[t]\}$ with the same arrival process $\{A[t]\}_{t\geq0}$ and an aggregated service process $\{R_{\Sigma}[t]\}_{t\geq0}$, where $R_{\Sigma}[t]\triangleq\sum_{n=1}^{N}R_n[t]$. The queue-length evolution of this single-server queue can be written as:
$\Phi[t+1]=\max\{\Phi[t]+A[t]-R_{\Sigma}[t],0\}$.
Due to resource pooling, the constructed hypothetical single-server's queue-length $\{\Phi[t]\}_{t\geq0}$ is stochastically smaller than $\{\sum_{n=1}^{N}Q_n[t]\}_{t\geq0}$ under any feasible policy. 
Hence, by \cite[Lemma 5]{eryilmaz2012asymptotically}, we immediately have the following lower bound:
\begin{align} \label{eqn:QueueLB_asym}
\sum_{n=1}^{N}\bE[\wh{Q}_n]\geq \frac{MN}{\epsilon},
\end{align}
where $M\triangleq\frac{\epsilon}{2N(\mu_{\Sigma}-\lambda)}\big(\text{Var}(A[t])+\sum_{n=1}^{N}\text{Var}(R_n[t])+(\mu_{\Sigma}-\lambda)^2\big)-\frac{1}{2}\epsilon R_{\max}$.
Lastly, combining (\ref{eqn:AgeLB_asym}),  (\ref{eqn:QueueLB_asym}), and \eqref{eqn:prop:stochastoc:ub} yields the desired result in Theorem~\ref{thm:general:asym} and the proof is complete.
\end{proof}

\begin{rem} \label{rem2}
{\em
From Theorem \ref{thm:general:asym}, we can see that for any fixed $\gamma$ value, we have
\begin{align*}
\lim_{\beta\rightarrow\infty}\rho(\beta,\gamma)\leq1-\frac{1}{2}\frac{\frac{1}{q\mu_{\max}}\sum_{n=1}^{N}\mu_n-1}{\frac{N}{q}-1},
\end{align*}
whose upper bound is equal to $1/2$ in the case with symmetric services, i.e., $\mu_1=\mu_2=\cdots=\mu_N$. However, we shall see from the numerical results presented in Section \ref{sec:numerical} that for any fixed $\gamma$ value, as $\beta$ increases, the PoA actually converges to zero in the case with symmetric services. 
The looseness of the upper bound analysis is due to the intrinsic nature of the Lyapunov analysis methodology, which only captures the drift among neighboring slots in temporal domain and does not characterize the Round-Robin behavior in spatial domain. }
\end{rem}


\begin{figure*}[t!]
    \begin{minipage}[t]{0.32\linewidth}
        \centering
        \includegraphics[width=1.1\textwidth]{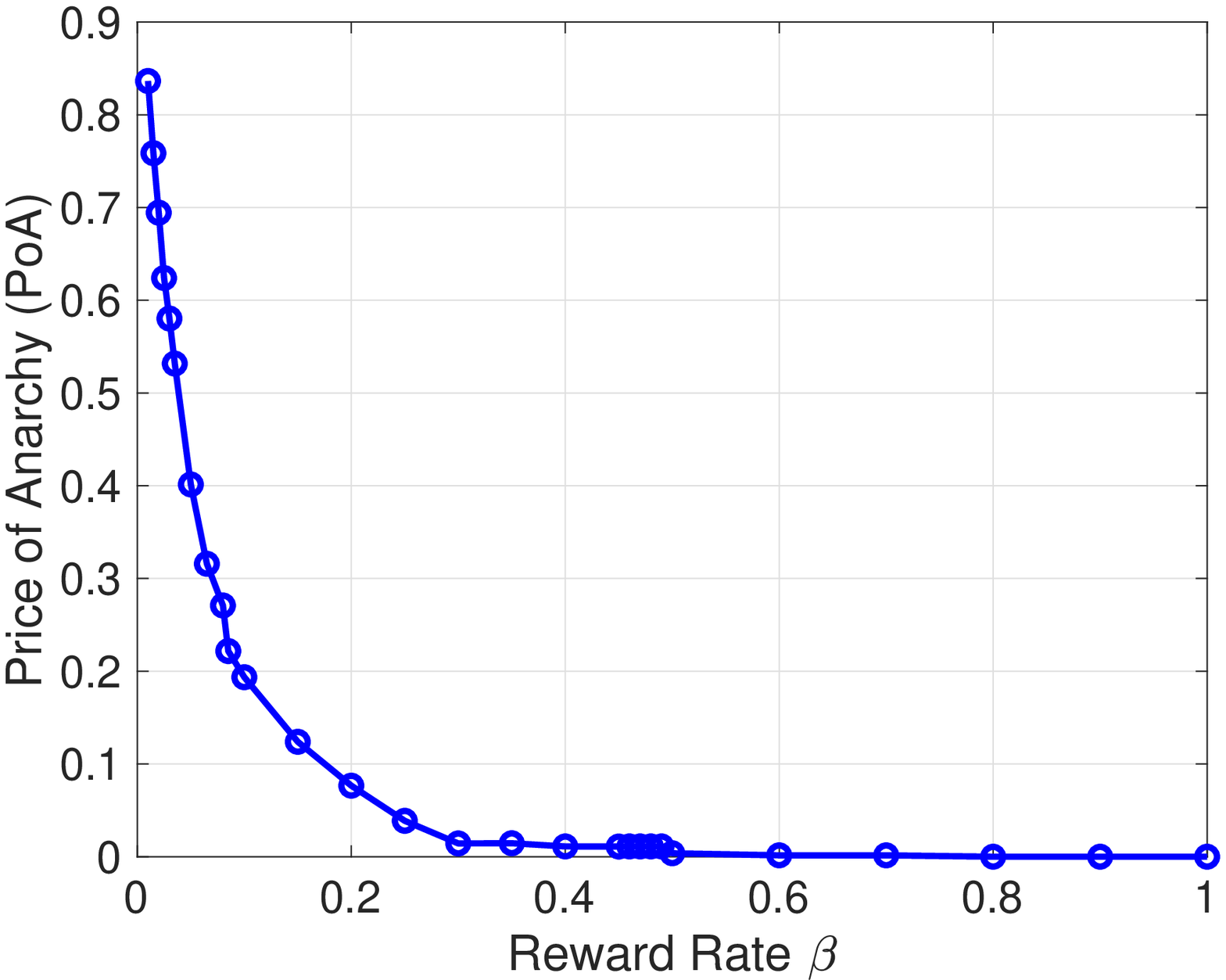}
        \caption{PoA with respect to reward rate $\beta$ in the deterministic case.} \label{fig:sim:periodic}
    \end{minipage}%
    \hspace{0.009\textwidth}
    \begin{minipage}[t]{0.32\linewidth}
        \centering
        \includegraphics[width=1.1\textwidth]{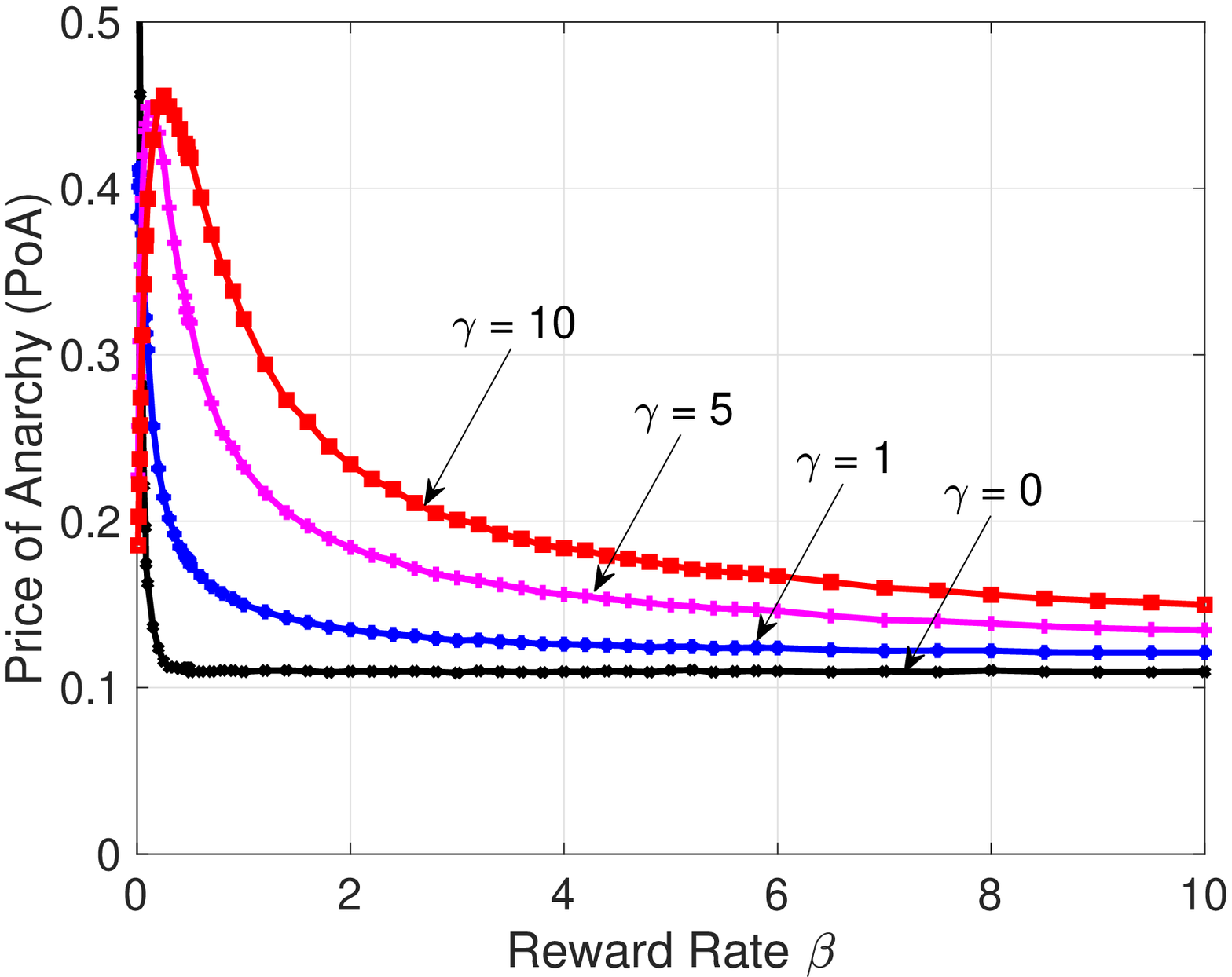}
        \caption{PoA with respect to $\beta$ and $\gamma$ in the stochastic case with asymmetric services.} \label{fig:general_asym_imperfect}
    \end{minipage}%
    \hspace{0.009\linewidth}
    \begin{minipage}[t]{0.32\linewidth}
        \centering
        \includegraphics[width=1.1\textwidth]{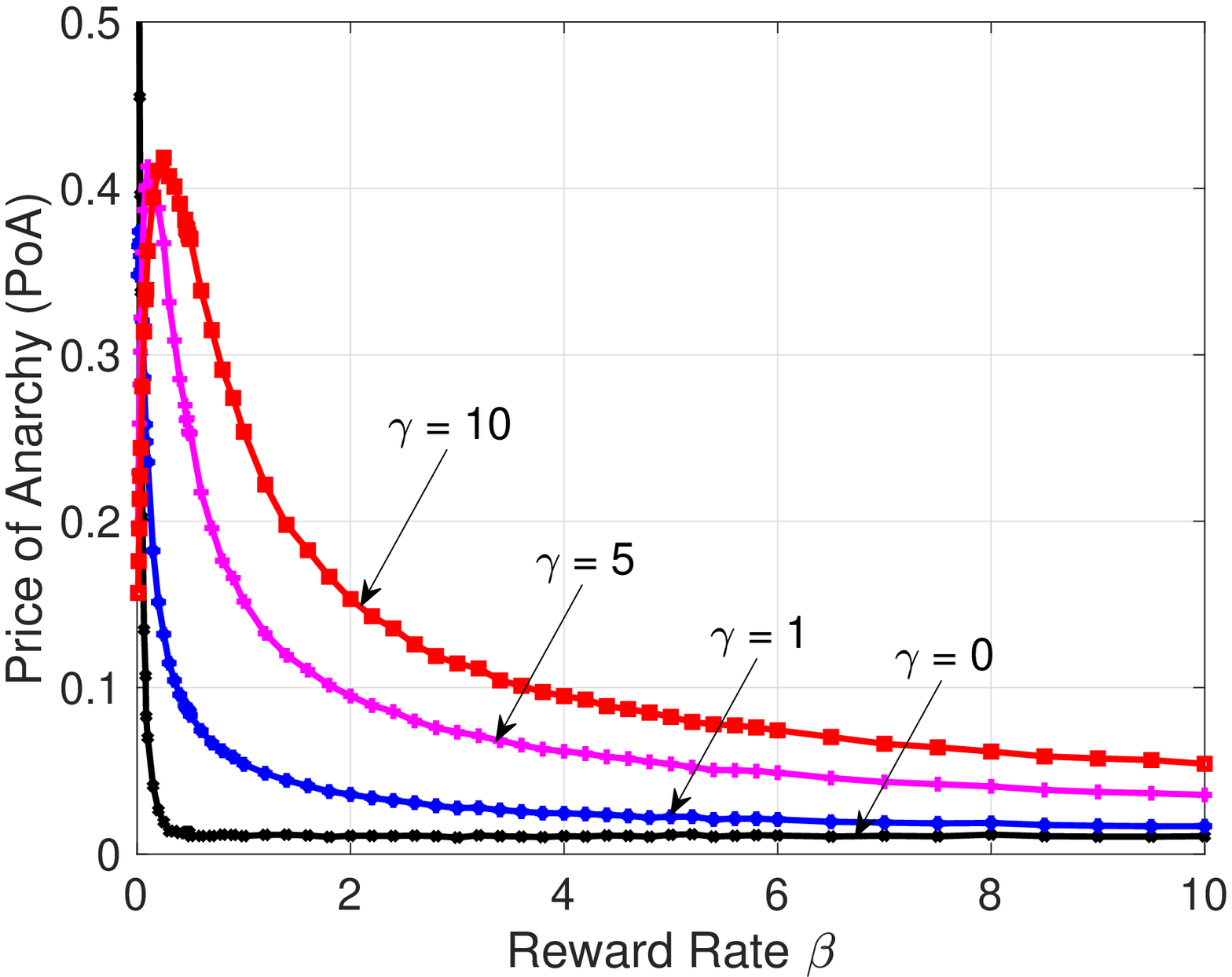}
        \caption{PoA with respect to $\beta$ and $\gamma$ in the stochastic case with symmetric services.}\label{fig:general_asym_perfect}
    \end{minipage}%
\vspace{-.2in}
\end{figure*}

\section{Numerical Results} \label{sec:numerical}

In this section, we conduct simulations to study the PoA performance under users' selfishness (cf. \eqref{eqn:user:behavior:general}) in a mobile crowed-learning system. 
We use a 10-PoI system and assume that each PoI $n$'s state information $p_n[t]$ belongs to the finite set $\{0.25,0.5,0.75,1\}$, and $p_{n}[t]$ changes to a different value uniformly at random every $100$ time slots. 
We consider both deterministic and stochastic cases. 
For the deterministic case, we assume that there is exactly one arriving user in each time slot and each PoI can serve one user in one time slot if any. 
For the stochastic case, we assume that users arrive at the system according to the Bernoulli distribution with mean $\lambda=0.9$ and service provided by each PoI $n$ follows an i.i.d. Bernoulli distribution with mean $\mu_n$, $n=1,2,\ldots,10$. 
We consider both symmetric and asymmetric services:
For symmetric services, we let $\mu_n=0.1, \forall n$;
For asymmetric services, we let $\mu_n=0.11,\forall n=1,2,\ldots,5$ and $\mu_n=0.09, \forall n=6,7,\ldots,10$. 

\smallskip
{\em 1) Deterministic Scenario:}
Fig. \ref{fig:sim:periodic} illustrates the PoA performance in the deterministic case. 
In this case, there is no queueing effect and the PoA performance reflects the information freshness due to users' selfish behavior compared to the optimal AoI performance. 
We can observe from Fig.~\ref{fig:sim:periodic} that PoA decreases as the reward rate $\beta$ increases and roughly follows the $O(1/\beta)$ law, meaning that the AoI performance improves. Moreover, PoA decreases to zero for $\beta\geq 0.5$. 
This means that the AoI performance is optimal even with selfish users. 
Both observations corroborate the result in Theorem \ref{thm:periodic}. 

\smallskip
{\em 2) Stochastic Scenario:}
Next, we study the PoA performance in stochastic cases.
We consider both symmetric and asymmetric services.
Here, PoA reflects the gap between joint AoI-congestion performance under users' selfishness compared to the optimal performance.
We note that, even without incorporating AoI, it remains an open problem to find an optimal policy to minimize the total mean queue-length. 
In deriving the upper bound on PoA, we use the fundamental lower bound on total mean queue-length (cf. \cite[Lemma~5]{eryilmaz2012asymptotically}), which may not be tight. 
In this simulation, we adopt the Join-the-Shortest-Queue (JSQ) policy (e.g., \cite{eryilmaz2012asymptotically}) and use its mean queue-length to serve as a lower bound for the queueing component in PoA. 
This is because JSQ minimizes the total mean queue-length (see \cite[Proposition~3]{li2013optimaloverload}) in the case with Bernoulli arrival and symmetric Bernoulli services, and it is optimal (see \cite{eryilmaz2012asymptotically}) in the case with general arrival and service processes in the heavy-traffic regime (i.e., arrival rate approaches the total service rate asymptotically). 



Fig.~\ref{fig:general_asym_imperfect} shows the PoA performance of the case with asymmetric services under different values of $\gamma$. 
We can see that, for any fixed $\gamma$ value, PoA converges to $0.1$ instead of $0$ as $\beta$ increases. 
The main reason is that we adopt the weighted sum of mean age as the metric for information freshness, and the policy that achieves optimal information freshness is unknown.
Thus, we use our derived fundamental lower bound on the weighted mean sum-age to replace the optimal value for the information freshness, which may render a loose bound on PoA. 
However, we point out that our derived lower bound is tight in the symmetric service case as the reward rate $\beta$ increases asymptotically, even though the derived upper bound of PoA is $1/2$ (cf. Remark~\ref{rem2}). 
Indeed, we can observe from Fig.~\ref{fig:general_asym_perfect} that PoA actually converges to zero as $\beta$ increases in the case with symmetric services. 


\section{Conclusion} \label{sec:conclusion}

In this paper, we have strived to understand whether or not we can achieve information freshness guarantee with selfish users in mobile crowd-learning.
To answer this question, we first developed a new analytical model that takes into account the essential features of mobile crowd-learning.
Then, based on this model, we showed that the natural greedy behavior of selfish users could lead to AoI instability, which necessitates the design of reward mechanisms to induce information freshness guarantee.
Toward this end, we proposed a linear AoI-based reward mechanism, under which we analyzed the impacts of users' selfishness on AoI based on the notion of Price of Anarchy (PoA).
We showed that the proposed reward mechanism achieves bounded AoI and congestion performances in terms of PoA, and can even achieves optimal AoI asymptotically in a deterministic scenario.
Collectively, these results serve as an exciting first step toward optimizing information freshness in mobile crowd-learning systems.

\bibliographystyle{IEEEtran}
\bibliography{IEEEabrv,./BIB/refs,./BIB/Crowdsensing,./BIB/AOI,./BIB/Bin}


\end{document}